\newtheorem{proposition}{Proposition}
\def\b0{{\pmb{0}}}
\newcommand{\IorbI}{I}
\begin{document}
\title{The Public Safety Broadband Network: \\ A Novel Architecture with Mobile Base Stations}

\author{
  \authorblockN{Xu Chen, Dongning Guo}
\authorblockA{Electrical Engineering \& Computer Science Department \\
    Northwestern University, % \\
    Evanston, IL, USA
\vspace{-1ex}
} \and
\authorblockN{John Grosspietsch}
\authorblockA{
Enterprise Mobility Solutions\\
Motorola Solutions, Inc., %\\
Schaumburg, IL, USA
\vspace{-1ex}}
}

% use for special paper notices
%\IEEEspecialpapernotice{(Invited Paper)}

\maketitle

%\doublespacing

\begin{abstract}
%\boldmath
A nationwide interoperable public safety broadband network is being planned by the United States government. The network will be based on long term evolution (LTE) standards and use recently designated spectrum in the 700 MHz band. The public safety network has different objectives and traffic patterns than commercial wireless networks. In particular, the public safety network puts more emphasis on coverage, reliability and latency in the worst case scenario. Moreover, the routine public safety traffic is relatively light, whereas when a major incident occurs, the traffic demand at the incident scene can be significantly heavier than that in a commercial network. Hence it is prohibitively costly to build the public safety network using conventional cellular network architecture consisting of an infrastructure of stationary base transceiver stations. A novel architecture is proposed in this paper for the public safety broadband network. The architecture deploys stationary base stations sparsely to serve light routine traffic and dispatches mobile base stations to incident scenes along with public safety personnel to support heavy traffic. The analysis shows that the proposed architecture can potentially offer more than 75\% reduction in terms of the total number of base stations needed.
\end{abstract}

% no keywords

\IEEEpeerreviewmaketitle

\section{Introduction}
\label{sec:intro}

The Federal Communications Commission (FCC) has recently designated
2$\times$10 MHz paired spectrum in the 700 MHz band for exclusive public safety uses nationwide. A unified public safety broadband network is being planned based on long term evolution (LTE) technologies.
The planned public safety network and existing commercial networks have different
characteristics.
Commercial networks have a higher user equipment (UE) density and target on a high
network revenue by providing diverse levels of services.
 The public safety network, however, aims at providing immediate access to the network (an excessive delay may cause loss of life and property), reliable communications with
guaranteed throughput and quality of service (QoS), as well as full nationwide coverage.

The traffic pattern of the public safety network and that of
commercial networks are also different. There are
two typical sources of traffic in the public safety network, namely, the light traffic due to routine activities such as patrols and
surveillance, and the heavy traffic due to large numbers of public
safety personnel at a major incident scene.
% Modeling the traffic of public safety network has been studied in
% \cite{CPCMI09,SCLST04,CSVVT07}.
There have been some studies of public safety networks. A model for public safety traffic under normal
and emergency scenarios has been developed in \cite{CPCMI09}. Reference \cite{SCLST04} analyzed the public safety traffic on land
mobile radio systems and found close fitting distributions for call inter-arrival time and call
holding times. In \cite{CSVVT07}, it is shown that the average number of busy channels in most
cells is small compared to their capacities, whereas there are busy periods of high
utilization.
Averaged over time and across areas, the amount of routine traffic in a public safety network is much lower than that in a typical commercial network. When a major incident occurs,
however, the amount of traffic reaches its peak, where the aggregate
demand in a cell is at least comparable to that in a typical cell in a commercial network.

The FCC estimates the cell site density and network deployment cost of the public safety network to be comparable to a commercial network \cite{PJAP10}, while some have estimated it to be even a few times more than deploying a commercial network due to more stringent requirements \cite{RJ10}.
%
% To build a more economic broadband network requires rethinking of the
% network architecture and adapting the design to the specific characteristics of public
% safety services.
%

In this paper, we propose a novel architecture for
%the planned
building an economic nationwide
public
safety broadband network. The wireless access points of the network consist of
sparsely deployed stationary base transceiver stations
(BTSs) for supporting light routine traffic and a distributed set of mobile BTSs ready to be deployed quickly to any incident
scene by vehicle or helicopter. A premise of this architecture is that
a mobile BTS can be dispatched to the incident scene as quickly as a
large number of personnel and can be set up quickly to provide needed wireless
services. This imposes a requirement on the density and placement of
mobile BTSs, as well as technologies that link the mobile BTSs to the
fixture infrastructure through, e.g., a wireless backhaul.

The proposed architecture is compared with the conventional architecture, where the cell sites are designed to satisfy the throughput requirement due to the sum of both light routine traffic and heavy incident scene traffic. We consider the worst case scenario where the incident scene is located at the cell edge. A stylized case study is carried out to obtain insights on how the throughput of the public safety network depends on the density of BTSs or the cell size under each architecture. In the conventional architecture, it is found that the cell size is mainly constrained by the heavy incident scene traffic. For the proposed architecture, depending on the traffic patterns, the backhaul and routine traffic demands are usually the critical constraints on the cell size. When the emergency traffic is mostly local, e.g., only 50\% of the incident scene traffic traverses the backhaul link to the core network, the routine traffic demand becomes the key constraint. It is found that the maximum cell size allowed using the proposed architecture to meet the traffic requirements is much larger than that using the conventional architecture. In all, the proposed architecture provides significant reduction in the total number of BTSs needed to provide the same throughput.

The remainder of the paper is organized as follows. Section
\ref{sec:architect} describes the conventional and the proposed
architectures. Section \ref{sec:analysis} analyzes the throughput for
downlink transmissions under those architectures. A case study %via
through simulation is provided in Section \ref{sec:simresults}. Section \ref{sec:conclusion} concludes the paper.

\section{Network Architectures}
\label{sec:architect}

%\begin{figure}
%  \centering
%  \includegraphics[width=7cm]{Fig_ProposedArch.eps}\\
%  \caption{Illustration of the proposed network architecture.}\label{fig:proparch}
%\end{figure}

There are two types of traffic in the public safety network. One is
the light traffic induced by routine activities, such as patrols and
surveillance. The other is heavy traffic due to a large number of public safety personnel at a major incident scene. The network architecture must be designed to meet the peak capacity requirement due to the sum of both types of traffic.

\subsection{The Conventional Architecture}

The conventional cellular network architecture is based on an infrastructure of stationary BTSs connected to the wired core network. If such an architecture is adopted by the nationwide public safety network, the deployment of BTSs needs to be dense enough to meet the peak demand of public safety UEs, including both routine UEs and incident scene UEs. The cost is estimated to be a few times more than deploying a commercial network \cite{PJAP10}. The utilization of the network is low most of the time, because the routine traffic is much lower than the peak traffic.

\subsection{The Proposed Architecture}

The novel architecture proposed here adapts to the traffic demand by
using a collection of mobile BTSs in addition to %stationary BTSs. A
a sparse network of
stationary BTSs deployed to support light routine traffic.
The mobile BTSs are sparsely distributed across the country, e.g.,
at selected fire stations. When a major incident occurs, a mobile BTS
is dispatched to the incident scene along with public safety
personnel.\footnote{It is conceivable to dispatch multiple mobile BTSs which coordinate to provide higher capacity. This is, however, out of the scope of the paper.} If the mobile BTS can start its service quickly (the initialization and configuration can begin en route), it suffices to have a mobile BTS at the scene just before the personnel at the scene becomes large.

Intuitively, the mobile BTS at the incident scene would be serving UEs
in a much smaller area than the coverage of a stationary BTS. Thus it
can support high data rate with enhanced QoS in the local area. One or
multiple wireless backhaul links are needed to connect the mobile BTS
to some stationary BTSs and thence to the core network. The
requirement on the capacity of the backhaul depends on how much of the
traffic at the incident scene is internal between local UEs, which
does not need to traverse the backhaul. Moreover, the backhaul link is
expected to be much stronger, because the mobile BTS remains mostly
stationary at the scene, has higher power and taller and more antennas
than the UEs, and that there may be a line of sight propagation.

\section{Throughput Analysis}
\label{sec:analysis}

%Throughout the paper,
We only consider downlink transmissions and we expect similar results to hold for uplink transmissions. It is assumed that every subcarrier of every link is subject to independent Rayleigh block fading with unit variance.

\subsection{A Generic Throughput Result}

We first give a generic result, which can then be applied to obtain
the throughputs of the conventional architecture and the proposed architecture in subsequent subsections.

We focus on a particular subcarrier and let $S_a$ denote the transmit
power spectral density of BTS $a$ on this subcarrier.
For this subcarrier of any link between two terminals $a$ and $b$, let $g(a,b)$ denote
the path loss and $h(a,b)$ denote the power gain of the Rayleigh
fading channel, which follows the exponential distribution with unit
mean, i.e., for every $x\ge0$,
\begin{align} \label{eq:distexp}
P\left\{ h(a,b)>x \right\} = e^{-x}.
\end{align}
% Assume that all subcarriers within the transmission band are always allocated the same power.
The signal-to-interference-plus-noise ratio (SINR) of UE $u$ from
its associated base station $o$
% transmitting on a particular subcarrier is defined as,
is thus
\begin{align}
{\rm SINR}_u = \frac{S_o h(o,u) g(o,u) }{\eta + \sum\limits_{a \in O} S_{a}
g(a,u) h(a,u) },
\end{align}
where $O$ is the set of interfering BTSs
and $\eta$ is the noise power spectral density.
% For a UE on a particular subcarrier with
% signal-to-interference-plus-noise ratio equal to SINR,
Let $\gamma$ be a prescribed threshold.
We say UE $u$'s spectral efficiency in the subcarrier (in bits/s/Hz) is
\begin{align}
\log(1+\gamma) \mathsf{P}\{ {\rm SINR}_u > \gamma \}.
\end{align}

\begin{proposition}\label{prop:throughput}
Let BTS $o$ serve all UEs in set $Z$ with bandwidth $W_o$ and transmit
power spectral density $S_o$. Each interfering BTS $a$ in set $O$ has
a transmit power spectral density $S_a$. The transmit power spectral
density is flat over its support. Suppose round-robin scheduling is
used, then the aggregate downlink throughput of all UEs in set $Z$ is
\begin{align}
\nonumber R &= W_o \log (1+\gamma) \frac{1}{|Z|} \sum\limits_{u \in Z} \Bigg[ \exp\left(-\frac{\eta \gamma}{g(u,o) S_o} \right) \\
& \hspace{2 cm} \prod\limits_{a \in O} \left(\frac{g(u,a) S_a \gamma}{g(u,o) S_o} +1\right)^{-1} \Bigg]
\end{align}
in bits per second (bps).
\end{proposition}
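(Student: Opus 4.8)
The plan is to first use the round-robin assumption to reduce the aggregate throughput to a sum of per-UE coverage probabilities, and then to evaluate a single coverage probability $\mathsf{P}\{{\rm SINR}_u>\gamma\}$ in closed form using the exponential fading model in \eqref{eq:distexp}.

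\emph{Step 1 (scheduling reduction).} Under round-robin scheduling over the $|Z|$ UEs in $Z$, BTS $o$ serves UE $u$ on a fraction $1/|Z|$ of its time--frequency resource, and while it is served UE $u$ occupies bandwidth $W_o$ and, by the definition of spectral efficiency adopted above, carries $\log(1+\gamma)\,\mathsf{P}\{{\rm SINR}_u>\gamma\}$ bits/s/Hz. Hence the long-run rate of UE $u$ is $\frac{1}{|Z|}W_o\log(1+\gamma)\,\mathsf{P}\{{\rm SINR}_u>\gamma\}$; summing over $u\in Z$ reduces the claim to showing that, for each $u$, $\mathsf{P}\{{\rm SINR}_u>\gamma\}$ equals the bracketed expression.

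\emph{Step 2 (coverage probability).} I would rewrite the event $\{{\rm SINR}_u>\gamma\}$ as $\{S_o g(o,u)h(o,u) > \gamma(\eta+\sum_{a\in O}S_a g(a,u)h(a,u))\}$, condition on the interfering gains $\{h(a,u)\}_{a\in O}$, and apply \eqref{eq:distexp} to the direct-link gain $h(o,u)$ (independent of the interferers), which gives
\begin{align}
\nonumber \mathsf{P}\{{\rm SINR}_u>\gamma\} &= \mathsf{E}\!\left[\exp\!\left(-\frac{\gamma\eta}{S_o g(o,u)}\right)\prod_{a\in O}\exp\!\left(-\frac{\gamma S_a g(a,u)}{S_o g(o,u)}h(a,u)\right)\right]\\
\nonumber &= \exp\!\left(-\frac{\gamma\eta}{S_o g(o,u)}\right)\prod_{a\in O}\mathsf{E}\!\left[\exp\!\left(-\frac{\gamma S_a g(a,u)}{S_o g(o,u)}h(a,u)\right)\right],
\end{align}
where the factorization uses the mutual independence of the fading gains. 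Each remaining expectation is the Laplace transform of a unit-mean exponential random variable evaluated at $s_a := \gamma S_a g(a,u)/(S_o g(o,u))\ge 0$, namely $(1+s_a)^{-1}$. Substituting, and using the symmetry $g(o,u)=g(u,o)$ and $g(a,u)=g(u,a)$, recovers exactly the bracketed factor, which completes the proof.

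\emph{Main obstacle.} There is no deep difficulty; the argument is essentially a computation. The one place that warrants care is the conditioning/independence step: one must check that $h(o,u)$ is independent of the entire interference term so that \eqref{eq:distexp} can be applied pointwise in the interferers, and that the $\{h(a,u)\}_{a\in O}$ are mutually independent so that the expectation of the product splits into a product of expectations; both are guaranteed by the standing assumption of independent Rayleigh fading across all links and subcarriers. It is also worth noting explicitly that it is precisely the channel-blind round-robin scheduler that produces the unweighted average $\frac{1}{|Z|}\sum_{u\in Z}$ rather than a scheduling-gain-weighted sum.
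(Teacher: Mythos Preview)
Your proposal is correct and follows essentially the same route as the paper: condition on the interference, use the exponential tail of $h(o,u)$ from \eqref{eq:distexp}, factor the resulting expectation by independence of the $\{h(a,u)\}$, evaluate each factor via the Laplace transform $(1+s)^{-1}$ of a unit-mean exponential, and combine with the $1/|Z|$ round-robin share over bandwidth $W_o$. The only cosmetic difference is ordering (you do the scheduling reduction first, the paper does it last) and that the paper conditions on the aggregate $I_u$ rather than on the individual $\{h(a,u)\}$, which is equivalent.
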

\begin{proof}
Suppose in a given time slot, a specific subcarrier is assigned to UE $u \in Z$. The total received interference power spectral density on UE $u $ is expressed as
\begin{align}
  \label{eq:Iu}
  \IorbI_u = \sum\limits_{a \in O} g(u,a) h(u,a) S_a.
\end{align}
Hence the received SINR of UE $u$ in the subcarrier is
\begin{align}
  \label{eq:1}
  {\rm SINR}_u = \frac{h(u,o) g(u,o) S_o }{  \eta  + \IorbI_u  }. %\sum\limits_{a \in O} g(u,a) \bh(u,a) P_2 /W_2}.
\end{align}
Given $\IorbI_u$, which is independent of the channel gain $h(u,o)$,
the success probability is
\begin{align}
 \mathsf{P} \left\{ {\rm SINR}_u > \gamma \right\}
 &= \mathsf{E} \left\{    \mathsf{P}\left\{ {\rm SINR}_u > \gamma | I_u \right\} \right\} \\
  &= \mathsf{E} \left\{   \mathsf{P}\left\{ h(u,o) > \frac{\gamma (\eta + \IorbI_u) }{g(u,o) S_o}  \bigg| \IorbI_u \right\} \right\}\label{eq:pfstep3}\\
  &= \mathsf{E} \left\{  \exp\left(-\frac{\gamma  (\eta + \IorbI_u) }{g(u,o) S_o} \right)  \right\} \label{eq:expI},
\end{align}
where \eqref{eq:expI} is due to \eqref{eq:distexp} and the expectation
is taken over $I_u$.
% defined in~\eqref{eq:Iu}.

Since $I_u$ is a sum of independent random variables, the success probability can be further written as
% \begin{align}
% \nonumber \mathsf{P} &\left\{ {\rm SINR}_u > \gamma \right\} \\
% \nonumber  & = e^{-\frac{\eta \gamma}{g(u,o) S_o}}  \\
% \label{eq:pfstep4} & \hspace{1cm} \times \prod\limits_{a \in O} \mathsf{E} \left\{ \exp \left( - \frac{ h(u,a) g(u,a) S_a \gamma}{g(u,o) S_o}   \right) \right\} \\
% \label{eq:pfstep5} & =  \exp\left(-\frac{\eta \gamma}{g(u,o) S_o} \right)
% \prod\limits_{a \in O} L\left(\frac{g(u,a) S_a \gamma}{g(u,o) S_o} \right),
% \end{align}
\begin{align}
\nonumber \mathsf{P} &\left\{ {\rm SINR}_u > \gamma \right\} \\
\nonumber  & = \exp\left(-\frac{\eta \gamma}{g(u,o) S_o} \right)  \\
\label{eq:pfstep4} & \hspace{1cm} \times \prod\limits_{a \in O} \mathsf{E} \left\{ \exp \left( - \frac{ h(u,a) g(u,a) S_a \gamma}{g(u,o) S_o}   \right) \right\} \\
\label{eq:pfstep5} & =  \exp\left(-\frac{\eta \gamma}{g(u,o) S_o} \right)
\prod\limits_{a \in O}
\left(\frac{g(u,a) S_a \gamma}{g(u,o) S_o} + 1\right)^{-1},
% L\left(\frac{g(u,a) S_a \gamma}{g(u,o) S_o} \right),
\end{align}
%\normalsize
%where the expectation in \eqref{eq:pfstep3} is taken over $\IorbI_u$; \eqref{eq:pfstep4} follows because  and
where~\eqref{eq:pfstep5} is by definition of the Laplace transform of
the exponential distribution with unit mean:
\begin{align}
  %L(s)
  \mathsf{E} \{ e^{-s h(u,a)} \} = (s+1)^{-1}. %\frac{1}{s+1}.
\end{align}

%Since
Round-robin scheduling allocates the subcarrier to each UE in
$Z$ with a fraction of $1/|Z|$ of the time.  %, by the symmetry of the users all the
                               %subcarriers have the same rate. Thus
By symmetry, the total throughput on all subcarriers is given by
\begin{align}
\label{eq:pfstep2} R =
W_o \frac{1}{|Z|} \sum\limits_{u \in Z} \log(1+\gamma)
\mathsf{P}\left\{ {\rm SINR}_u > \gamma \right\} .
\end{align}
%where $\mathsf{P}\left\{ {\rm SINR}_u > \gamma \right\}$ is
%calculated as
Plugging in~\eqref{eq:pfstep5} proves the proposition.
\end{proof}

\subsection{Throughput of the Conventional Architecture}

In the conventional architecture, a stationary BTS serves both the
routine UEs and the incident scene UEs.
The total downlink bandwidth is $W$.  The total transmit power of every
stationary BTS
% across the whole bandwidth
is $P$.
Denote the stationary BTS covering the incident scene by $o$.
Denote the set of routine UEs and incident scene UEs served by the
same BTS by $U$ and $C$, respectively.
Each %routine UE and incident scene
UE experiences intercell interference from all BTSs from set $O$.
%, with each having a total power of $P$.

Applying Proposition \ref{prop:throughput} with
$|Z| = |U| +|C|$,
$W_o = W$,
and
$S_o = S_a = {P}/{W}$,
 the aggregate throughput of routine UEs is %given by
\begin{align}
%\small
\tilde{R}_u = \frac{W \log(1+\gamma) }{|U|+|C|} \sum\limits_{u \in U} \exp\left(-\frac{\eta W \gamma}{g(u,o) P} \right)
\prod\limits_{a \in O}
\frac1{\frac{g(u,a) \gamma}{g(u,o)} +1}
%\left(\frac{g(u,a) \gamma}{g(u,o)} +1\right)^{-1}
%L\left(\frac{g(u,a) \gamma}{g(u,o)} \right)
\end{align}
and the aggregate throughput of incident scene UEs is
\begin{align}
%\small
\tilde{R}_c = \frac{W \log(1+\gamma) }{|U|+|C|} \sum\limits_{c \in C} \exp\left(-\frac{\eta W \gamma}{g(c,o) P}\right)
\prod\limits_{a \in O }
\frac1{\frac{g(c,a) \gamma}{g(c,o)}+1}.
%L\left(\frac{g(c,a) \gamma}{g(c,o)}\right).
\end{align}

\subsection{Throughput of the Proposed Architecture}

\begin{figure}
  \centering
  \includegraphics[width=\columnwidth]{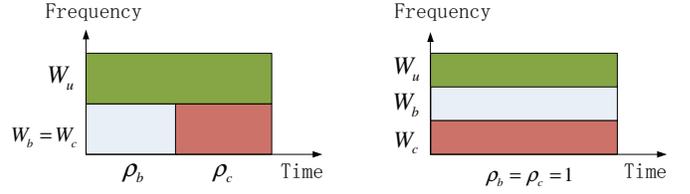}
  \caption{Left: TDRS; right: FDRS.}
  \label{fig:FT}
\end{figure}

Let the proposed architecture be adopted.
The total downlink bandwidth is $W$.
Let the bandwidth allocated to the link from the stationary BTS to
routine UEs be $W_u$.
The remaining bandwidth is used for transmissions by the stationary
BTSs to the mobile BTS (via the downlink wireless backhaul) and incident scene
UEs.
Let the mobile BTS use either frequency-domain resource sharing~(FDRS)
or
time-domain resource sharing~(TDRS)
to orthogonally divide the frequency and time resources for the two
types of links.
Let the bandwidth of the (downlink) backhaul be $W_b$ and
the bandwidth of the link from the mobile BTS to incident
scene UEs be $W_c$.
Further denote $\rho_b$ and $\rho_c$ as the proportion of time
resource allocated to the corresponding links, respectively.
% We consider both frequency-domain resource sharing~(FDRS) and
% time-domain resource sharing~(TDRS) as illustrated in
%  Fig.~\ref{fig:FT}.
% both types of links share the same spectrum, hence $W_b = W_c$. Let
% the fraction of the time allocated to backhaul transmission be
% $\rho_b$ and the fraction of the time allocated to serving the
% incident scene UEs be $\rho_c = 1- \rho_b$. In case of FDRS, the two
% types of links transmit at the same time, i.e., $\rho_b = \rho_c =
% 1$. The frequency and time resource allocation for the TDRS and FDRS
% modes are summarized as follows,
See Fig.~\ref{fig:FT} for an illustration, where the parameters
satisfy the following relationships:
\begin{itemize}
\item Under TDRS, $W_b = W_c = W- W_u$ and $\rho_b + \rho_c = 1$.
\item Under FDRS, $W_b + W_c = W - W_u$ and $\rho_b = \rho_c = 1$.
\end{itemize}

Let the total transmit power of a stationary BTS and a mobile BTS be $P$
and $P^{\prime}$, respectively.
The transmit power from a stationary BTS to a mobile BTS associated
with it
% , i.e., the backhaul link,
is $P_b < P$.
% For the stationary BTS in the proposed architecture, with $P_b$
% allocated to transmit on the backhaul link, the total transmission
The transmit power from a stationary BTS to its routine UEs is thus $P-P_b$.

To avoid interference to the mobile BTS located at the cell edge, we
disallow all stationary BTSs to transmit when the backhaul link is
active under TDRS or to transmit on the frequency resources allocated
to the backhaul under FDRS. The throughput to routine UEs in all cells
is degraded by no more than the degradation suffered by the routine
UEs in the cell in which the incident occurs. Hence, all cells satisfy
the routine traffic requirement if the cell covering the incident
scene does.

Let $o^{\prime}$
denote the mobile BTS at the center of the incident scene.
Applying $S_o = {P_b }/{ W_b}$, $W_o = W_b$, $S_a =0$ and $Z =
o^{\prime}$ to Proposition~\ref{prop:throughput} and taking into
account the fraction of time $\rho_b$ allocated to the backhaul, the
backhaul throughput is
\begin{align}
R_b = \rho_b W_b \log(1+\gamma)\exp\left(-\frac{\eta W_b \gamma}{g(o,o^{\prime}) P_b}\right)
\end{align}
where the parameters $\rho_b$ and $W_b$ depend on whether FDRS or TDRS
is used.
%where $\rho_b = 1$ for FDRS and $\rho_b = 1- \rho_c$ for TDRS.

We next calculate the throughput to the UEs.
%of routine UEs and incident scene UEs for the two resource sharing
%schemes.
In the case of TDRS, applying $W_o = W_u$, $S_o = (P - P_b)/ W_u$, $S_a = {P}/{W}$ and $Z = U$ in Proposition \ref{prop:throughput}, the aggregate throughput from the stationary BTS to routine UEs (in bps) is
\begin{align}
\nonumber
R_u^{\rm TDRS} & = \frac{W_u}{|U|} \log(1+\gamma) \sum\limits_{u \in U} \exp\left(-\frac{\eta W_u \gamma}{g(u,o) (P-P_b)}\right) \\
\label{eq:RuPropArch}&
\qquad
\prod\limits_{a \in O }
\left(\frac{g(u,a) P W_u \gamma}{g(u,o) (P-P_b) W}  +1 \right)^{-1}.
%L\left(\frac{g(u,a) P W_u \gamma}{g(u,o) (P-P_b) W}   \right).
\end{align}
Similarly, applying $W_o = W_c$, $ S_o ={ P^{\prime} }/{ W_c}$,
$S_a = {P}/{W}$ and $Z = C$ to Proposition~\ref{prop:throughput}
and taking into account the fraction of time $\rho_c$, the aggregate
throughput from the mobile BTS to its incident scene UEs (in bps) is
\begin{align}
\nonumber R_c^{\rm TDRS} &= \frac{\rho_c W_c}{|C|} \log(1+\gamma) \sum\limits_{c \in C} \exp\left(-\frac{\eta W_c \gamma}{g(c,o^{\prime}) P^{\prime} } \right) \\
 & \qquad \prod\limits_{a \in O }
\left( \frac{g(c,a) \gamma P W_c}{g(c,o^{\prime}) P^{\prime} W} +1 \right)^{-1}.
%L\left( \frac{g(c,a) \gamma P W_c}{g(c,o^{\prime}) P^{\prime} W}  \right).
\end{align}

In the case of FDRS, applying $W_o = W_u$, $S_o = (P - P_b)/{ W_u}$, $S_a
= {P}/(W-W_b)$ and $Z = U$ to Proposition~\ref{prop:throughput}, the
aggregate throughput from the stationary BTS to routine UEs (in bps)
is
\begin{align}
\nonumber R_u^{\rm FDRS} &= \frac{W_u}{|U|} \log(1+\gamma) \sum\limits_{u \in U} \exp\left(-\frac{\eta W_u \gamma}{g(u,o) (P-P_b)}\right) \\
\label{eq:RuPropArch}& \,\,\,\quad \prod\limits_{a \in O }
\left(\frac{g(u,a) P W_u \gamma}{g(u,o) (P-P_b) (W-W_b) }   + 1\right)^{-1}.
%L\left(\frac{g(u,a) P W_u \gamma}{g(u,o) (P-P_b) (W-W_b) }   \right).
\end{align}
Applying $W_o = W_c$, $ S_o = {P^{\prime} }/{ W_c}$, $S_a = {P}/(W -
W_b)$ and $Z = C$ to Proposition~\ref{prop:throughput}, the aggregate
throughput from the mobile BTS to its incident scene UEs (in bps) is
\begin{align}
\nonumber R_c^{\rm FDRS} &= \frac{W_c}{|C|} \log(1+\gamma) \sum\limits_{c \in C} \exp\left(-\frac{\eta W_c \gamma}{g(c,o^{\prime}) P^{\prime} } \right) \\
 & \qquad \prod\limits_{a \in O }
\left( \frac{g(c,a) \gamma P W_c}{g(c,o^{\prime}) P^{\prime} (W -
    W_b)}  + 1\right)^{-1}.
%L\left( \frac{g(c,a) \gamma P W_c}{g(c,o^{\prime}) P^{\prime} (W - W_b)}  \right).
\end{align}

\section{A Case Study via Simulation}
\label{sec:simresults}

\begin{figure}
\centering
  % Requires \usepackage{graphicx}
  \includegraphics[width=8cm]{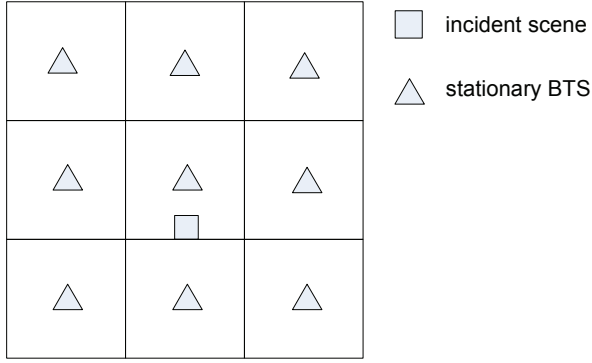}
  \caption{Network topology with square cell site and the incident scene being at the cell edge.}\label{fig:topology}
\end{figure}

\begin{table}
\small
\centering
\caption{System parameters.}
\begin{tabular}{|l|c|}
  \hline
  % after \\: \hline or \cline{col1-col2} \cline{col3-col4} ...
%  Parameter & Value \\ \hline
  Incident scene size & $200 m \times 200 m$ \\ \hline
  Number of routine UEs in a cell & 80 \\ \hline
  Number of incident scene UEs & 50 \\ \hline
  Path loss from stationary BTS & $34.5 + 35 \log_{10}(d)$, \\
   to mobile BTS &  $d$ in meters \\ \hline
  Path loss from BTS  & $39.3 + 37.6 \log_{10} (d)$, \\
   to routine UEs/incident UEs & $d$ in meters \\ \hline
   %Time sharing factor $\rho$ & 0.4 \\ \hline
   %Total spectrum $W$ & 10 MHz \\ \hline
   %Backhaul spectrum $W_b$ & 5 MHz \\ \hline
   Total number of resource blocks & 50 \\ \hline
   Number of subcarriers/resource block & 12 \\ \hline
   Subcarrier spacing & 15 kHz \\ \hline
   Transmit power of a stationary BTS $P$ & 46 dBm \\ \hline
   Transmit power for a backhaul $P_b$ & 45 dBm \\ \hline
   Transmit power of a mobile BTS $P^{\prime}$ & 43 dBm \\ \hline
   Noise power spectral density $\eta$ & -174 dBm/Hz \\ \hline
   SINR threshold $\gamma$ & 10 dB \\ \hline
\end{tabular}
\label{tab:simpara}
\end{table}
\normalsize

\begin{figure}[t]
  % Requires \usepackage{graphicx}
  \centering
  \includegraphics[width=\columnwidth]{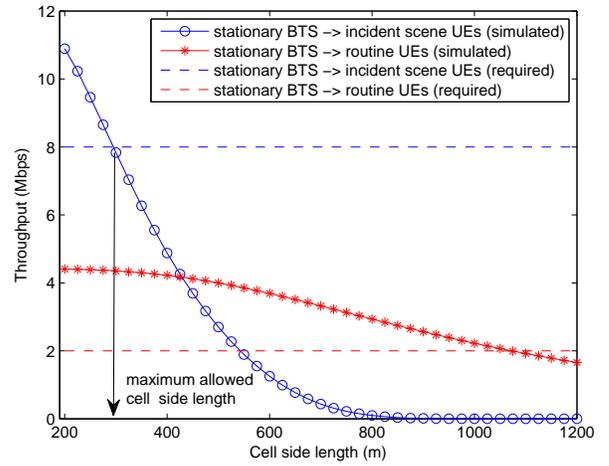}
  \caption{Tradeoff between routine UE throughput and incident scene UE throughput for the conventional architecture.}\label{fig:throughputconvarch}
\end{figure}

Consider a stylized network topology as illustrated in
Fig.~\ref{fig:topology}, where the public safety network is composed
of square grids and an incident occurs at the cell edge. The routine
UEs and incident scene UEs are evenly spaced according to lattices in
each cell and the incident scene, respectively. The system parameters
chosen according to LTE standards~\cite{LTEStandard} are listed in
Table~\ref{tab:simpara}.
% Following the LTE standard
%  the transmission power of stationary BTS is taken to be 46 dBm and
%  the path loss functions from a BTS to a UE and from a stationary
%  BTS to a mobile BTS are taken to be $39.3 + 37.6 \log_{10} (d)$ and
%  $34.5 + 35 \log_{10}(d)$, respectively, where $d$ is the distance
%  between the transmitter and the receiver in meters.
Throughout the simulation, the total downlink bandwidth is $W = 50 \times 12
\times 15$ kHz $= 9$ MHz.

%If the conventional architecture is adopted, two types of links are of interest, namely, the link from a stationary BTS to a routine UE and the link from a stationary BTS to an incident scene UE. If the proposed architecture with mobile BTS is adopted, three types of links are of interest, namely, the link from a stationary BTS to a mobile BTS, from a stationary BTS to a routine UE and from a mobile BTS to an incident scene UE.

We compare the number of BTSs needed by the conventional architecture
and the proposed architecture, respectively,
%for supporting the following target rates: The
so that the aggregate throughput required for the routine UEs, the
incident scene UEs and the backhaul are $T_u = 2$ Mbps, $T_c = 8 $
Mbps and $T_b = 4$ Mbps, respectively. In this case, $4/8 = 50\%$ of
the emergency traffic traverses the backhaul.
Fig.~\ref{fig:throughputconvarch} plots the throughput as a function of
the cell side length if the conventional architecture is adopted.
The side length must not exceed 300 $m$ in order to
meet the required throughput for the incident scene UEs ($T_c$).
%It can be seen from Fig.~\ref{fig:throughputconvarch} that under the
%conventional architecture,

Figs.~\ref{fig:throughputproparch} and~\ref{fig:throughputproparchFDRS} show the throughput tradeoff
 for the proposed architecture under TDRS and FDRS,
 respectively.
Under TDRS, the number of resource blocks
 assigned to a backhaul link is 25, i.e., $W_u = W_b = W_c = 25 \times
 12 \times 15 \text{ kHz} = 4.5 \text{ MHz}$ ($W=9$ MHz) and $\rho_b = 0.4$.
Under FDRS, the number of resource blocks assigned to the backhaul and
the link from the mobile BTS to incident scene UEs is 10 and 15,
 respectively.  Hence $W_u = 4.5 \text{ MHz}$, $W_b = 1.8
 \text{ MHz}$ and $W_c = 2.7 \text{ MHz}$.
It can be seen in both figures % Fig.~\ref{fig:throughputproparch}
that the routine
UEs' rate ($T_u$) puts the critical constraint on the cell size. It
suffices to have 900~$m$ side length for each square
cell. Consequently, the number of stationary BTS required is reduced
by slightly more than $1- (300/900)^2 = 8/9$ using the proposed
architecture.

The remaining crucial question is how many mobile BTSs are needed in
the entire network. Here we give a crude estimate by assuming that a
mobile BTS can be dispatched at the same time as the first responders
and arrive at the incident scene within three times the amount of time
it takes the nearest fire engine to arrive.  The assumption is
reasonable because the stationary BTSs can support a small number of
personnel at the beginning of an incident, and a mobile BTS is needed
only when a much larger number of personnel arrive.  Under this assumption, the number of mobile BTSs required is about 1/9 of the total fire stations across the country (approximately 48,800 of them according to the National Fire Department census). The number of mobile BTSs needed is thus about $48,800/9 \approx5,422$.  The total number of stationary BTSs in the U.S. is about 44,000 in 2010 according to the FCC \cite{PJAP10}.  The total number of stationary and mobile base stations required by the proposed architecture is thus about $1/9 + 5422/44000 \approx 23\%$ as many as that required by the conventional architecture.

A mobile BTS is quite different from a stationary BTS.
% Our simulation shows that even if the
In this simulation, the transmit power of a mobile BTS is only one-half of
that of a stationary BTS.
%, it can still achieve a desirable throughput on the incident scene
%UEs.
A mobile BTS may be portable. It does not require any building roof
top space, but needs to be transported to the incident scene
quickly. In remote areas where deployment of cellular infrastructure
is not cost effective, or in disasters where the cellular
infrastructure is destroyed, mobile BTSs can perhaps be helicoptered
in to provide crucial local access points, and backhaul links could be
established via satellites. Taking these into account, the proposed
architecture may bring an even larger reduction on the total cost of
the public safety broadband network.

\begin{figure}
  % Requires \usepackage{graphicx}
  \centering
  \includegraphics[width=\columnwidth]{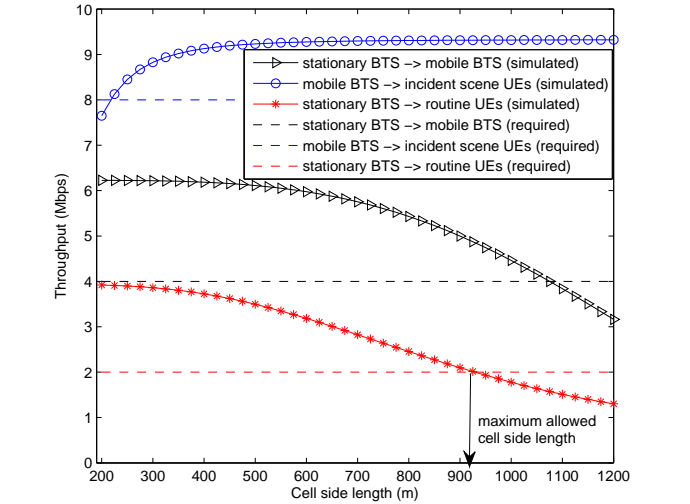}
  \caption{TDRS between backhaul link and the mobile BTS to incident scene UEs link. Tradeoff between (1) routine UE throughput, (2) incident scene UE throughput and (3) backhaul throughput for the proposed architecture. }\label{fig:throughputproparch}
\end{figure}

\begin{figure}
  % Requires \usepackage{graphicx}
  \centering
  \includegraphics[width=\columnwidth]{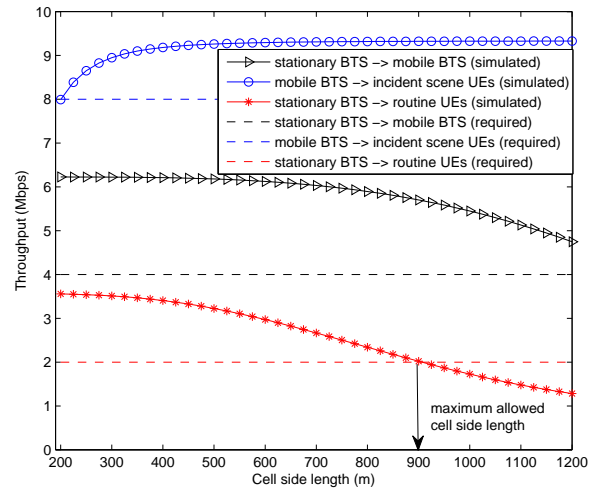}
  \caption{FDRS between backhaul link and the mobile BTS to incident scene UEs link. Tradeoff between (1) routine UE throughput, (2) incident scene UE throughput and (3) backhaul throughput for the proposed architecture. }\label{fig:throughputproparchFDRS}
\end{figure}

\section{Conclusion}
\label{sec:conclusion}

In this paper, we have proposed a novel network architecture for the public safety broadband network, where stationary BTSs are deployed sparsely and a mobile BTS is dispatched to the scene when a major incident occurs. A stylized case study shows that utilizing the mobile BTS is a promising solution, especially if the emergency traffic is mostly local to the incident scene. The proposed architecture can be a viable solution for the nationwide public safety broadband network.

\bibliographystyle{IEEEtran}
\bibliography{IEEEabrv,architectpsnbib}
\end{document}